\DeclarePairedDelimiter{\norm}{\lVert}{\rVert}
\newcommand{\beq}{\begin{equation}}
\newcommand{\eeq}{\end{equation}}
\newcommand{\bitm}{\begin{itemize}}
\newcommand{\ba}{\begin{array}}
\newcommand{\ea}{\end{array}}
\newcommand{\eitm}{\end{itemize}}
\newcommand{\beqn}{\begin{eqnarray}}
\newcommand{\eeqn}{\end{eqnarray}}
\newcommand{\beqno}{\begin{eqnarray*}}
\newcommand{\eeqno}{\end{eqnarray*}}
\newcommand{\bma}{\begin{displaymath}}
\newcommand{\ema}{\end{displaymath}}
\newcommand{\bnu}{\begin{enumerate}}
\newcommand{\enu}{\end{enumerate}}
\newcommand{\bce}{\begin{center}}
\newcommand{\ece}{\end{center}}
\newcommand{\btb}{\begin{tabular}}
\newcommand{\etb}{\end{tabular}}
\newtheorem{theorem}{\textbf{\textsc{Theorem}}}
\begin{document}
\title{\huge Dynamic Network Service Selection in IRS-Assisted Wireless Networks: A Game Theory Approach}

\author{ 
\IEEEauthorblockN{Nguyen Cong Luong,
Nguyen Thi Thanh Van,
Feng Shaohan,
Huy T. Nguyen,
Dusit Niyato,~\IEEEmembership{Fellow,~IEEE}
and
Dong In Kim,~\IEEEmembership{Fellow,~IEEE}}

\thanks{N. C. Luong is with the Faculty of Computer Science, PHENIKAA University,
Hanoi 12116, Vietnam. E-mail:luong.nguyencong@phenikaa-uni.edu.vn.}
\thanks{N. T. T. Van is with the Hanoi Vocational College of High Technology, Vietnam. Email: vanntt@haui.edu.vn.}
\thanks{S. Feng, H. T. Nguyen and D. Niyato are with the School of Computer Science and Engineering, Nanyang Technological University, Singapore. Emails: feng0089@e.ntu.edu.sg,
huyt.nguyen@ntu.edu.sg, dniyato@ntu.edu.sg.}
\thanks{D.~I.~Kim is with School of Information and Communication Engineering, Sungkyunkwan University, Korea. Email: dikim@skku.ac.kr.}
\vspace{-0.7cm}
}

\maketitle
\begin{abstract}
In this letter, we investigate the dynamic network service provider (SP) and service selection in an intelligent reflecting surface (IRS)-assisted wireless network. In the network, mobile users select different network resources, i.e., transmit power and IRS resources, provided by different SPs. To analyze the SP and network service selection of the users, we formulate an evolutionary game. In the game, the users (players) adjust their selections of the SPs and services based on their utilities. We model the SP and service adaptation of the users by the replicator dynamics and analyze the equilibrium of the evolutionary game. Extensive simulations are provided to demonstrate consistency with the analytical results and the effectiveness of the proposed game approach. 
\end{abstract}
\begin{IEEEkeywords}
Intelligent reflecting surface, evolutionary game, network selection.
\end{IEEEkeywords}

\section{Introduction}
To improve the spectrum and energy efficiency of the forthcoming and future wireless networks (5G and beyond), network service providers (SPs) have investigated and deployed several innovation technologies such as massive MIMO and mmWave. However, the required high complexity and high hardware cost are still the main hindrance to their implementation in practice. Recently, intelligent reflecting surface (IRS) has emerged as a new and cost-effective solution for the SPs. An IRS is generally composed of a large number of passive elements, each of which is able to reflect the incident signal with an adjustable phase-shift. By intelligently tuning the phase-shifts of all elements adaptive to dynamic wireless channels, the signals reflected by the IRS can add constructively with non-reflected signals at the user receiver to boost the received signal power and enhance the data throughput at the user receiver. As such, IRS allows the SPs to improve the spectrum and energy efficiency, extend the network coverage and enhance the Quality of Serivce (QoS) of the users with low cost.


Apart from the above benefits, IRS enables the SPs to provide a new network resource and a new service to the mobile users. In particular, the SPs can provide IRS resources to the mobile users in addition to traditional network resources such as antenna, spectrum, and power. Indeed, there are some works that have investigated the IRS resource allocation issues. In particular, the authors in~\cite{gao2020reflection} consider the allocation of transmit power and IRS resources to mobile users. The network includes one base station (BS) and one IRS. The IRS is divided into modules of reflection elements, i.e., reflection modules. Then, the problem is to determine the number of reflection modules, the corresponding passive beamforming, and transmit power for the users to maximize the signal-to-interference-plus-noise ratio (SINR). Note that the authors consider the allocation of reflection modules, i.e., instead of all the reflection elements, to the users since triggering all the reflection elements frequently can result in the increased latency of adjusting phase-shift. To solve this problem, the parallel alternating direction method of multipliers (PADMM) algorithm is used. Different from~\cite{gao2020reflection}, the authors in~\cite{gao2020stackelberg} assume that the BS and the IRS belong to different SPs. To maximize the individual utility of the BS and IRS, the Stackelberg game is proposed to jointly optimize the IRS resource price, the transmit power, and the passive beamforming of the triggered reflection modules. However, both the works in \cite{gao2020reflection} and~\cite{gao2020stackelberg} consider scenarios with a single IRS. 

In this letter, we consider an IRS-assisted wireless network with multiple SPs and multiple mobile users. The SPs deploy BSs along with IRSs and provide network services to the users. In particular, the SPs are responsible for allocating the transmit power and IRS resources to the users for their data transmissions. To satisfy different QoS requirements of the mobile users, similar to other works, e.g., \cite{gao2020reflection},~\cite{gao2020stackelberg}, we assume that each SP divides the transmit power and the IRS into different power levels and reflection modules, respectively. The SPs may set different prices for their resources, and the users that select different SPs and services may achieve different utilities. Due to their rationality, the users with low utilities have an incentive to adapt their SP and service selections. In other words, the users can dynamically change their SP and service selection strategies over time. To model the dynamic SP and service selection strategies of the users, we propose to use the evolutionary game~\cite{hofbauer2003evolutionary}. The reason for the use of the evolutionary game is that this game is able to deal with the problem of dynamic selection strategies of players, i.e., the users.  In particular, the players in the evolutionary game are bounded rational, and they can adapt their strategies gradually to reach the evolutionary equilibrium. Furthermore, the algorithm to implement the strategy adaptation based on the evolutionary game has a low complexity that is suitable for the dynamic strategy selections of the users~\cite{gao2019dynamic}.

Our main contribution is as follows: we first formulate the SP and service selection in the network as an evolutionary game. In the game, the users form populations, and they adjust their SP and service selections based on their utilities. We then model the SP and service adaptation of the users as replicator dynamics in the evolutionary game and analyze the equilibrium of the evolutionary game. Finally, we provide performance evaluation to demonstrate the consistency with the analytical results and to validate the proposed game model. 

\section{System Model}
\begin{figure}[h]
 \centering
\includegraphics[width=0.85\linewidth]{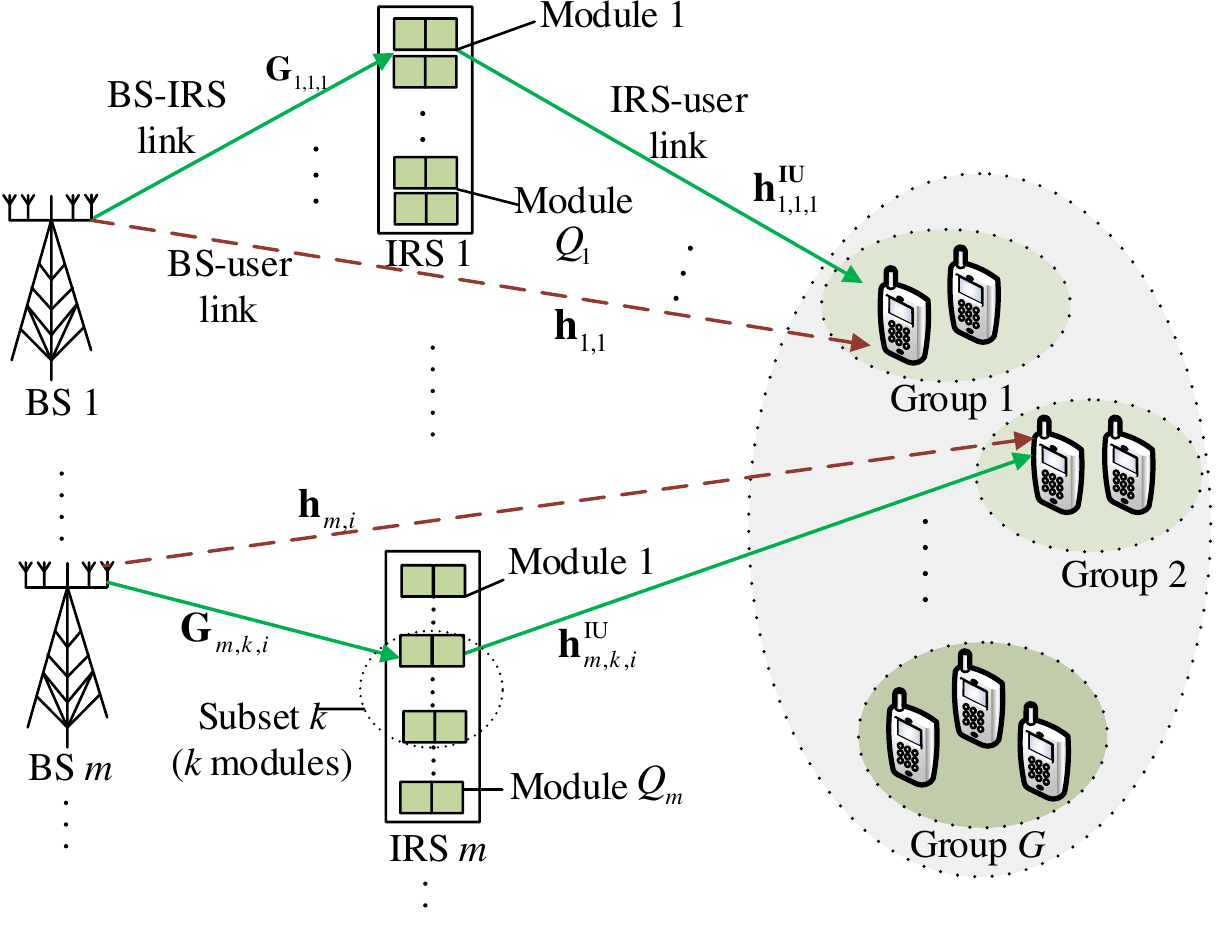}
 \caption{A wireless network with multiple IRS modules.}
  \label{IRS_model_selection}
\end{figure}
We consider a system model as shown in~Fig.~\ref{IRS_model_selection} that consists of a set $\mathcal{M}$ of $M$ BSs and a set $\mathcal{N}$ of $N$ single-antenna users. The $M$ BSs are assumed to belong to $M$ service providers (SPs), and each BS $m$ is equipped with $L_m$ antennas. Note that our model can be extended to a general case in which one SP deploys multiple BSs and IRSs. To avoid the intra-cell interference, each SP uses time-division multiplex access (TDMA) to deploy the IRS-enhanced communication service. Also, frequency-division multiplex access (FDMA) is used to avoid the inter-cell interference among the users belonging to different BSs. Let $B_m$ denote the bandwidth assigned to BS $m$. To provide flexible services to the users, BS $m$ has a set $\mathcal{P}_m$ of $P_m$ power levels, denoted by $\{J_{m,1},\ldots,J_{m,P_m}\}$, that the users can select for their transmissions. We assume that $J_{m,1} <J_{m,2}< \dots < J_{m,P_m}$, where $ J_{m,P_m}$ is the maximum power of BS $m$. SP $m$ deploys an IRS, i.e., denoted by IRS $m$, to improve the QoS for the users that BS $m$ serves. IRS $m$ has $K_m$ reflection elements, and the IRS is divided into $Q_m$ modules controlled by parallel switches. Each module in IRS $m$ consists of $E_m$ elements, and thus $K_m = Q_mE_m$. One BS-IRS pair can serve multiple users, but the user is associated with one BS-IRS pair. Moreover, the user is allowed to select one or multiple modules, i.e., a subset of modules, of the selected IRS. In general, given a selected power level and bandwidth $B_m$, the data throughput achieved by the user depends on the number of modules, i.e., instead of the orders/indexes of the modules, in the selected subset. Thus, SP $m$ has $Q_m$ potential subsets of modules that the user can select, and subset $k, 1 \leq k \leq Q_m$, has $k$ modules. Denote $\boldsymbol{\Theta}_{m,k}$ as the phase-shift matrix corresponding to the subset that the user selects, i.e., subset $k$ of IRS $m$. Then, $\boldsymbol{\Theta}_{m,k}$ is a diagonal matrix in which its main diagonal consists of phase-shifts of $kE_m$ reflection elements of IRS $m$. In particular, we have $\boldsymbol{\Theta}_{m,k}=\text{diag}(\theta_{m,k,1},\ldots,\theta_{m,k, kE_m})$, where $\theta_{m,k,e}$ is the phase-shift of reflection element $e$ of subset $k$ of IRS $m$, $\theta_{m,k,e}=e^{j\alpha_{m,k,e}}, \alpha_{m,k,e} \in [0, 2\pi), e=\{1,\ldots,kE_m\}$. With the assistance of subset $k$ of IRS $m$, the signal received at each user $i$ is the sum of 1) the received signal via the direct link and 2) the received signal via the IRS-assisted link. Thus, the received signal at user $i \in \mathcal{N}$ when selecting subset $k$ of IRS $m$ and power level $J_{m,j}$ of BS $m$ is determined as follows:
\begin{equation}
y_{i}=\big{(}\mathbf{h}^{\text{H}}_{m,i} +   (\mathbf{h}^{\rm{IU}}_{m,k,i})^{\text{H}}\boldsymbol{\Theta}^{\text{H}}_{m,k}\mathbf{G}_{m,k}\big{)}\mathbf{w}_{m,j,i}s_{i} + \omega_i,
\label{received_signal_user}
\end{equation}
where $s_{i}$ is the data symbol intended to user $i$, $\mathbf{w}_{m,j,i}  \in \mathbb{C}^{L_m\times1}$ is the beamforming vector associated with $s_{i}$ containing power level $J_{m,j}$ that the user selects, $\mathbf{h}_{m,i} \in \mathbb{C}^{L_m\times1}$ is the vector of channels of the direct link from BS $m$ to user $i$, $\mathbf{h}^{\rm{IU}}_{m,k,i} \in \mathbb{C}^{K_m\times1}$ is the vector of channels of the link from subset $k$ of IRS $m$ to user $i$, $\mathbf{G}_{m,k} \in \mathbb{C}^{K_m \times L_m}$ is the vector of channels from BS $m$ to subset $k$ of IRS $m$, and $\omega_i$ is the complex additive white Gaussian noise at user $i$, $\omega_i \sim \mathcal{CN}(0,\sigma_0^2)$, where $\sigma_0^2$ is the variance. We assume that the channel state information (CSI) of all channels involved is perfectly known at BS $m$, i.e., based on the pilot signals. In addition, IRSs are typically deployed in static environments due to the challenging task of CSI estimation, we can also assume that the quasi-static flat-fading model or even static flat-fading model is applied for all channels~\cite{wu2019intelligent}. The signal-to-noise ratio (SNR) of user $i$ is defined as the ratio of the power of signal targeted to user $i$ and the noise over bandwidth $B_m$ as follows:
\begin{equation}
\eta_{m,k,i} = \frac{|(\mathbf{h}^{\text{H}}_{m,i} + (\mathbf{h}^{\rm{IU}}_{m,k,i})^{\text{H}}\boldsymbol{\Theta}^{\text{H}}_{m,k}\mathbf{G}_{m,k,i})\mathbf{w}_{m,j,i}|^2} {B_m\sigma_0^2}.
\label{SINR_user_i}
\end{equation}

\section{Game Formulation and Equilibrium Analysis}
There are totally $N$ users, $M$ BSs, and $M$ IRSs in the network. SP $m$ offers a set $\mathcal{P}_m$ of $P_m$ power levels and a set $\mathcal{Q}_m$ of $Q_m$ subsets of IRS modules that the user can select. In particular, one subset consists of one or multiple modules of the IRS. Again, one module of IRS $m$ consists of $E_m$ elements. Without loss of generality, we assume that $N$ users are divided into $G=\sum_{m=1}^MP_mQ_m$ groups. Each group, say group $g,1 \leq g \leq G$, consists of $N_{m,k,j}$ users that select IRS $m$, subset $k$, i.e., the corresponding phase-shift matrix $\boldsymbol{\Theta}_{m,k}$, and power level $J_{m,j}$. We can say that users in group $g$ select network service $g$. Note that when the user selects IRS $m$, it is only allowed to select subsets of modules of IRS $m$ and power levels of BS $m$. We have $\sum_{m=1}^M\sum_{k=1}^{Q_m}\sum_{j=1}^{P_m}N_{m, k,j}=N$, and each user in group $g$ selects IRS $m$, $\boldsymbol{\Theta}_{m,k}$, and $J_{m,j}$ at a probability of $p_{m,k,j}=N_{m,k,j}/N$. The phase-shift matrix of the IRS is typically optimized for a group of nearby users, and we assume that the channel statistics are identical for those users in the same group~\cite{gao2019dynamic}. Therefore, the users in the group can be represented by an index set $(m,k,j)$, and the expected downlink transmission rate of the user in the group is 
 \begin{equation}
 \overline{R}_{m,k,j}=\frac{B_m}{p_{m,k,j}N}
 \log_2\big{(}1+\eta_{m,k,j}\big{)},
 \end{equation}
 where $\eta_{m,k,j}=\frac{|(\mathbf{h}^{\text{H}}_{m,j} +(\mathbf{h}^{\rm{IU}}_{m,k,j})^{\text{H}}\boldsymbol{\Theta}^{\text{H}}_{m,k}\mathbf{G}_{m,k})\mathbf{w}_{m,j}|^2} {B_m\sigma_0^2}$, where $\mathbf{w}_{m,j}$ refers to the beamforming vector associated with the users in the group that selects power level $J_{m,j}$. Let $v_{m,k,j}$ denote the value of unit data to the user in group $g$ when selecting  IRS $m$, subset $k$, and power level $J_{m,j}$. Denote $\gamma_m^I$ as the price per element in IRS $m$ and $\gamma_m^P$ as the price per unit power. Prices $\gamma_m^I$ and $\gamma_m^P$ are set by SP $m$ that are constant. Since users in each group share the same resources, they should share the resource cost. Then, the utility of the user is given by
\begin{equation}
 u_{m,k,j} = v_{m,k,j} \overline{R}_{m,k,j} - \big{(}\gamma_m^I \norm{\boldsymbol{\Theta}_{m,k}}_0 - \gamma_m^P J_{m,j}\big{)}/p_{m,k,j}N,
 \label{utility_user}
\end{equation}
where the $l_0$-norm is used to count the number of non-zero elements of a diagonal matrix that here refers to the number of active reflection elements of IRS $m$ that the user selects. The average utility of the user is $\overline{u}_{m,k,j}=\sum_{m=1}^M\sum_{k=1}^{Q_m}\sum_{j=1}^{P_m} p_{m, k,j} u_{m, k,j}$. We leverage the replicator dynamics to model the SP and service adaptation of the users. The replicator dynamic process of the users is expressed as a series of ordinary differential equations as follows:
\begin{align}
\notag
\dot{p}_{m, k,j}(t) =&\mu p_{m, k,j}(t)[u_{m, k,j}(t)-\overline{u}(t)], \\
& m \in \mathcal{M}, k \in \mathcal{Q}_m, j \in \mathcal{P}_m,  \forall t,
\label{replicator_evolu}
\end{align}
where $\dot{p}_{m, k,j}(t)$ represents the first derivative of $p_{m, k,j}$ with respect to $t$, and $p_{m, k,j}(t_0)=p^0_{m, k,j}$ is the initial strategy of the users in group $g$ at $t_0$. The factor $\mu$ is the learning rate of the users that evaluates the strategy adaptation frequency.

We analyze the equilibrium of the evolutionary game defined in (\ref{replicator_evolu}). For presentation simplification, let $g, 1 \leq g \leq G,$ denote the combination of index $(m,k,j)$. Let $f_g(t,p_g)=\mu p_g(t)[u_g(t)-\overline{u}(t)]$. Equation (\ref{replicator_evolu}) can be re-written as follows:
\begin{equation}
\dot{p}_g(t)  =f_g(t,p_g), p_g(t_0)=p^0_g, g =(1,\ldots,G).
\label{replicator_evolu_2}
\end{equation}

\begin{theorem}~\cite{picard_theorem}
Suppose that $f_g(t,p_g)$ and $\frac{\partial f_g}{\partial p_g}(t,p_g)$ are continuous functions in some open rectangle $\{ (t,p_g): 0 \leq  t \leq  \tau, 0 < p_g \leq 1\}$ that contains the point $(t_0,p^0_g)$. Then, the problem (\ref{replicator_evolu_2}) has a unique solution in the closed interval of $I=[t_0-h, t_0+h]$, where $h> 0$. Moreover, the Picard iteration defined by 
\begin{equation}
p^{l+1}_g(t)  =p^0_g + \int_{t_0}^t f_g(t,p^{l}_g(t) ) \;\mathrm{d}t
\label{Picard_iteration}
\end{equation}
produces a sequence of functions ${ p^{l}_g(t)}$ that converges to the solution uniformly on $I$.
\label{theorem_picard}
\end{theorem}

\begin{proof}
Please refer to~\cite{picard_theorem} for the detailed proof.
\end{proof}

Here, we prove that $f_g(t,p_g)$ and $\frac{\partial f_g}{\partial p_g}(t,p_g)$ are continuous functions in the rectange $\{ (t,p_g): 0 \leq  t \leq  \tau, 0\leq p_g \leq  1\}$. Indeed, it is clear that function $p_g(t)=\frac{N_g(t)}{N}$ is continuous at every $t_0 \in [ 0, \tau]$. Moreover, due to the static flat-fading channel model, the channel vectors $\mathbf{h}_{m,j}(t)$, $\mathbf{h}_g^{\rm{IU}}(t)$, and $\mathbf{G}_g(t)$ are constant and thereby continuous at every $t_0 \in [ 0, \tau]$. Therefore, $\eta_{g}(t)$ is continuous at every $t_0 \in [ 0, \tau]$, and functions $\overline{R}_g(t), u_g(t)$, and $\overline{u}(t)$ are also continuous at every $t_0 \in [ 0, \tau]$ if $p_g(t_0)\neq 0$. Since $f_g(t,p_g)=\mu p_g(t)[u_g(t)-\overline{u}(t)]$ and $\frac{\partial f_g}{\partial p_g}(t,p_g)=\mu[u_g(t)-\overline{u}(t)]$, then $f_g(t,p_g)$ and $\frac{\partial f_g}{\partial p_g}(t,p_g)$ are continuous functions in the open rectangle $\{ (t,p_g): 0 \leq  t \leq  \tau, 0 < p_g \leq 1\}$. According to Theorem~(\ref{Picard_iteration}), problems in (\ref{replicator_evolu_2}) and (\ref{replicator_evolu}) converge to a unique solution. This is verified by simulations in the next section. 

Note that to make the decision on SP and service selections, the users need information about the average utility, i.e., $\overline{u}_{m,k,j}$, and proportion of users choosing different strategies, i.e., $p_{m,k,j}$, from the BSs. However, the up-to-date information may not be available at the users due to the communication latency. Therefore, at time $t$, the users base on the information at time $t-\delta$, i.e., delay for $\delta$ units of time, to make the SP and service selections. Thus, the delayed replicator dynamic process is 
\begin{align}
\notag
\dot{p}_{m, k,j}(t) =&\mu p_{m, k,j}(t-\delta)[u_{m, k,j}(t-\delta)-\overline{u}(t-\delta)], \\
\notag
& m \in \mathcal{M}, k \in \mathcal{Q}_m, j \in \mathcal{P}_m,  \forall t.
\label{replicator_evolu}
\end{align}

Note that as delay $\delta$ is large, the decisions of the users based on the outdated information tend to be inaccurate. As a result, the evolutionary game with the delayed replicator dynamics may not converge. How to determine $\delta^*$ such that the evolutionary game converges is challenging. As an example, consider a simple scenario with $M=2$, and each SP $m$ offers one service including subset $\boldsymbol{\Theta}_{m}$ and power level $J_m$:
\begin{theorem}
The stability of the evolutionary equilibrium with the delayed replicator dynamics can be guaranteed if and only if
\begin{equation}
\delta^* < \frac{\pi}{2\mu \sum_{m\in \mathcal{M}} \frac{B_m\log_2(1 + \eta_m) -\gamma_m^I||\boldsymbol{\Theta}_{m}||_0-\gamma_m^PJ_m }{N} }
\label{stability}
\end{equation}
\label{theorem_stability}
\end{theorem}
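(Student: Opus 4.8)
The plan is to reduce the two--population delayed replicator system of (\ref{replicator_evolu}) to a single scalar linear delay--differential equation and then apply the classical stability criterion for $\dot x(t)=-a\,x(t-\delta)$. Because $M=2$ and each SP offers exactly one service, there are only two groups and the constraint $p_1(t)+p_2(t)=1$ makes the state one--dimensional. Using the identity $u_1-\overline u=p_2(u_1-u_2)$ together with the fact that (\ref{utility_user}) and $\overline R_m=\frac{B_m}{p_mN}\log_2(1+\eta_m)$ give $u_m=\Phi_m/(p_mN)$, where $\Phi_m:=B_m\log_2(1+\eta_m)-\gamma_m^I\norm{\boldsymbol{\Theta}_m}_0-\gamma_m^PJ_m$ is constant (by the static flat--fading assumption $\mathbf h_m,\mathbf h_m^{\mathrm{IU}},\mathbf G_m$, and hence $\eta_m$, do not depend on $t$, and $\Phi_m$ is exactly the per--group quantity in the denominator of (\ref{stability})), the nonlinear $p_1p_2$ factor cancels against the $1/p_m$ terms in the utilities, and the equation collapses to the affine form
\[
\dot p_1(t)=\frac{\mu}{N}\big[\Phi_1-(\Phi_1+\Phi_2)\,p_1(t-\delta)\big].
\]

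Next I would identify the interior rest point $p_1^\star=\Phi_1/(\Phi_1+\Phi_2)$ (which lies in $(0,1)$ when the aggregate values $\Phi_1,\Phi_2$ are positive, the relevant regime) and pass to the deviation $x(t)=p_1(t)-p_1^\star$, which satisfies the prototypical scalar delay equation $\dot x(t)=-a\,x(t-\delta)$ with $a=\mu(\Phi_1+\Phi_2)/N=\mu\sum_{m\in\mathcal M}\Phi_m/N>0$. Substituting the ansatz $x(t)=e^{\lambda t}$ yields the transcendental characteristic equation $\lambda+a\,e^{-\lambda\delta}=0$, and local asymptotic stability of the evolutionary equilibrium is equivalent to every root of this equation having strictly negative real part.

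The final step is the standard root--crossing argument. At $\delta=0$ the only root is $\lambda=-a<0$, so the system is stable; since the roots move continuously with $\delta$, stability can be lost only when a root reaches the imaginary axis. Setting $\lambda=i\omega$ and separating real and imaginary parts gives $a\cos(\omega\delta)=0$ and $\omega=a\sin(\omega\delta)$, whose smallest positive solution is $\omega=a$, $\delta=\pi/(2a)$; a check of the transversality condition $\mathrm d(\operatorname{Re}\lambda)/\mathrm d\delta>0$ at that value shows that a conjugate pair of roots genuinely enters the open right half--plane there. Hence the equilibrium is asymptotically stable if and only if $\delta<\pi/(2a)=\pi N/\big(2\mu\sum_{m\in\mathcal M}\Phi_m\big)$, which is precisely the bound (\ref{stability}).

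I expect the delicate point to be the ``only if'' half of the final step: showing that for every $\delta\in(0,\pi/(2a))$ all of the infinitely many roots of $\lambda+a e^{-\lambda\delta}=0$ remain in the open left half--plane, rather than merely locating the first imaginary crossing. This is where one must either invoke the known stability region of the scalar delay equation (Hayes' theorem) or give a self--contained argument--principle (Rouch\'e) count of the right--half--plane zeros as $\delta$ increases from $0$, combined with the monotonicity $\mathrm d(\operatorname{Re}\lambda)/\mathrm d\delta>0$ at the crossing that rules out any re--entry into stability.
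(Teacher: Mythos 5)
Your derivation is correct: with two groups the delayed replicator dynamics collapse exactly (not merely after linearization) to $\dot x(t)=-a\,x(t-\delta)$ with $a=\mu\sum_{m\in\mathcal M}\Phi_m/N$, and the classical stability criterion $a\delta<\pi/2$ for this scalar delay equation yields precisely the stated bound. The paper omits its own proof and simply defers to~\cite{gao2019dynamic}, which argues along exactly these lines, so your proposal follows essentially the same approach and in fact supplies the details the paper leaves out.
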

\begin{proof}
The detailed proof can be derived by following~\cite{gao2019dynamic} which is omitted from here.  
\end{proof}

Theorem (\ref{theorem_stability}) shows that the stability of the evolutionary equilibrium is guaranteed as the users use the information at $t< \delta^*$ for their decisions.

\section{Performance Evaluation}
In this section, we present the numerical results to demonstrate the effectiveness of the proposed dynamic SP and service selection in the IRS-assisted wireless network. We consider a network with $2$ SPs, $2$ BSs, $2$ IRSs and $100$ users. The sizes of IRS 1 and IRS 2 are $8$ elements. SP 1 divides IRS 1 to $2$ modules, and SP 2 does not divide IRS 2. Each BS is equipped with $4$ antennas and offers $2$ power levels that the users can select: $J_{1,1}=15$ dBm, $J_{1,2}=30$ dBm, $J_{2,1}=10$ dBm, and $J_{2,1}=20$ dBm. As such, SP 1 offers 4 services, namely Services 1, 2, 3, and 4, and SP 2 offers 2 services, namely Services 1 and 2. All the channels suffer a Rayleigh fading, and the path loss at the reference distance  $d_0=1$ m is $-30$ dB. Due to the obstacles, the pass loss of the direct links from the BS to the users is much higher than those of the links between the BS and the IRS as well as the links between the IRS and the users, we can thus set $\alpha_{m,k}=\alpha_{m,k,j}=2$, and $\alpha_{m,j}=6$. When the user selects a service of the SP, the BS uses the fixed point iteration algorithm~\cite{yu2019miso} to optimize the beamforming vector and the phase-shift matrix of the IRS subset for the user.
\begin{figure}[]
	\centering
	\begin{minipage}[t]{0.24\textwidth}
	 \includegraphics[width=1\textwidth]{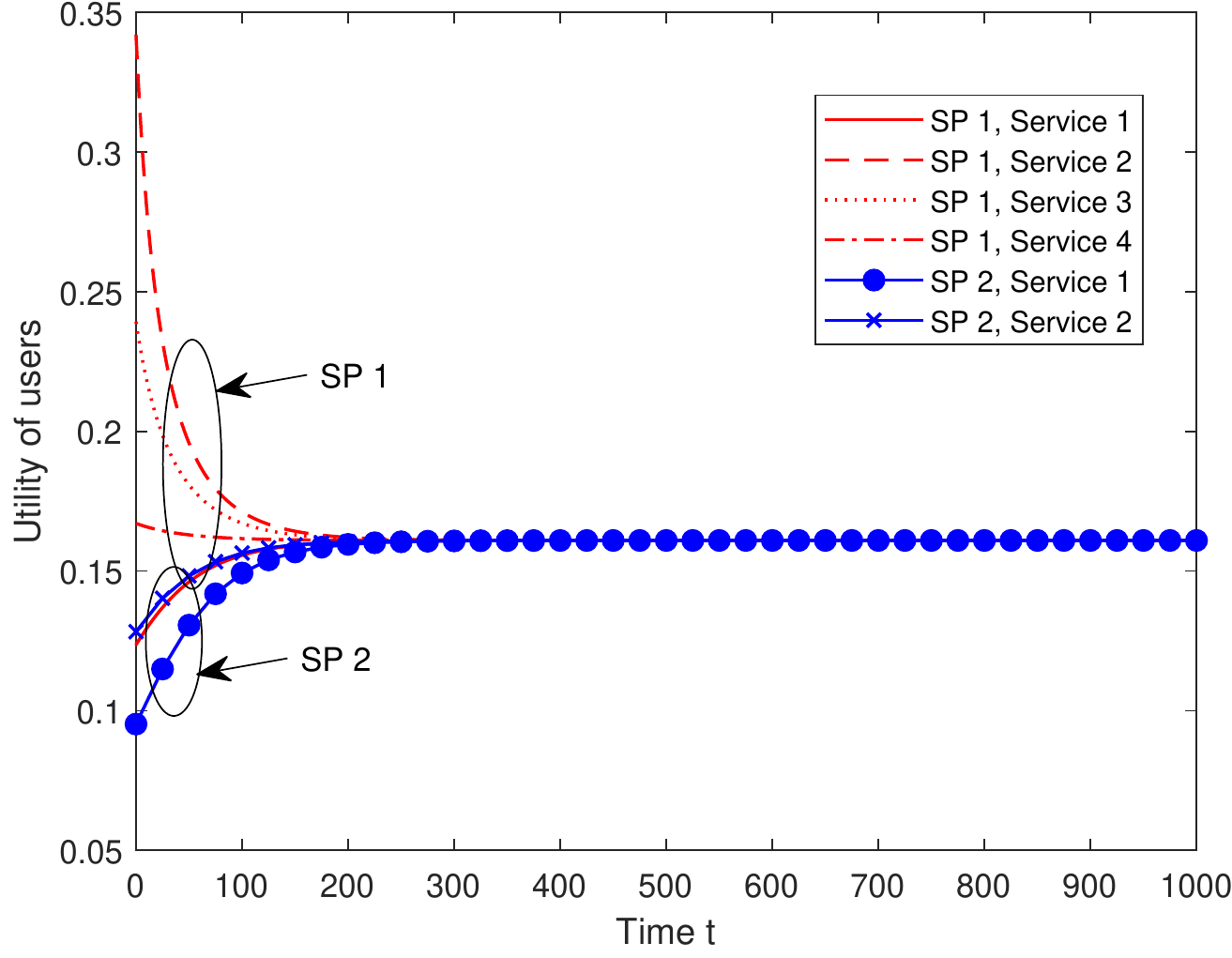}
	\subcaption{}
	\end{minipage}
	\begin{minipage}[t]{0.24\textwidth}
		\includegraphics[width=1\textwidth]{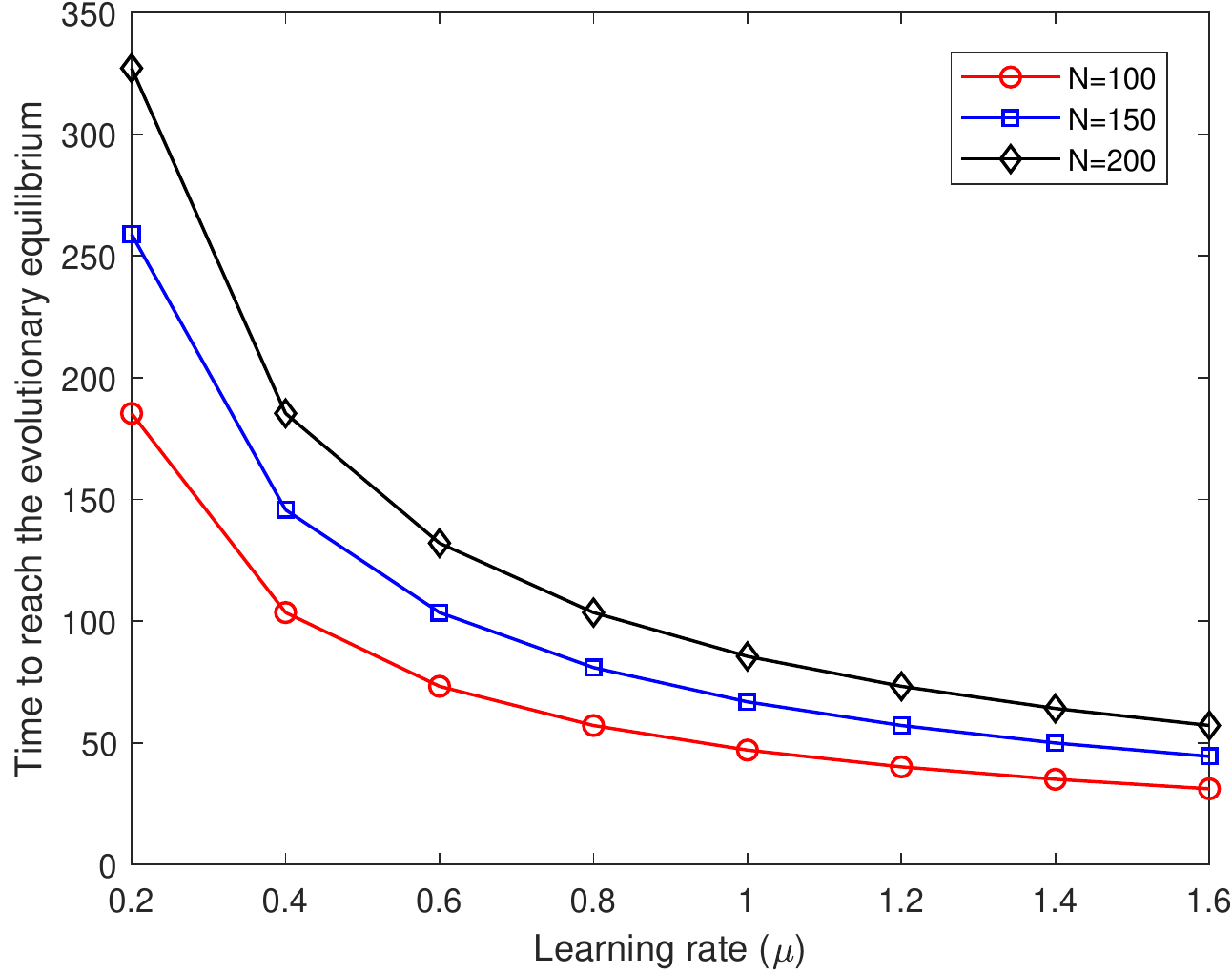}
		\subcaption{}
	\end{minipage}
		\caption{(a) Utility of user groups vs. time and (b) time to reach the equilibrium.}
	\label{equilibrium}
\end{figure}

First, it is important to verify the equilibrium convergence of the game scheme. Figure~\ref{equilibrium}(a) shows the utilities of the users selecting different SPs and services versus time. As seen, the utilities of the users selecting different SPs and services vary until the evolutionary equilibrium is reached. Also, at the evolutionary equilibrium, the users achieve the same utility although they select different SPs and services. The reason is that the evolutionary equilibrium is reached only when the utilities of the users selecting any service provided by any SP are equal to the expected utilities of the users. 
\begin{figure}[]
	\centering
	\begin{minipage}[t]{0.24\textwidth}
		\includegraphics[width=1\textwidth]{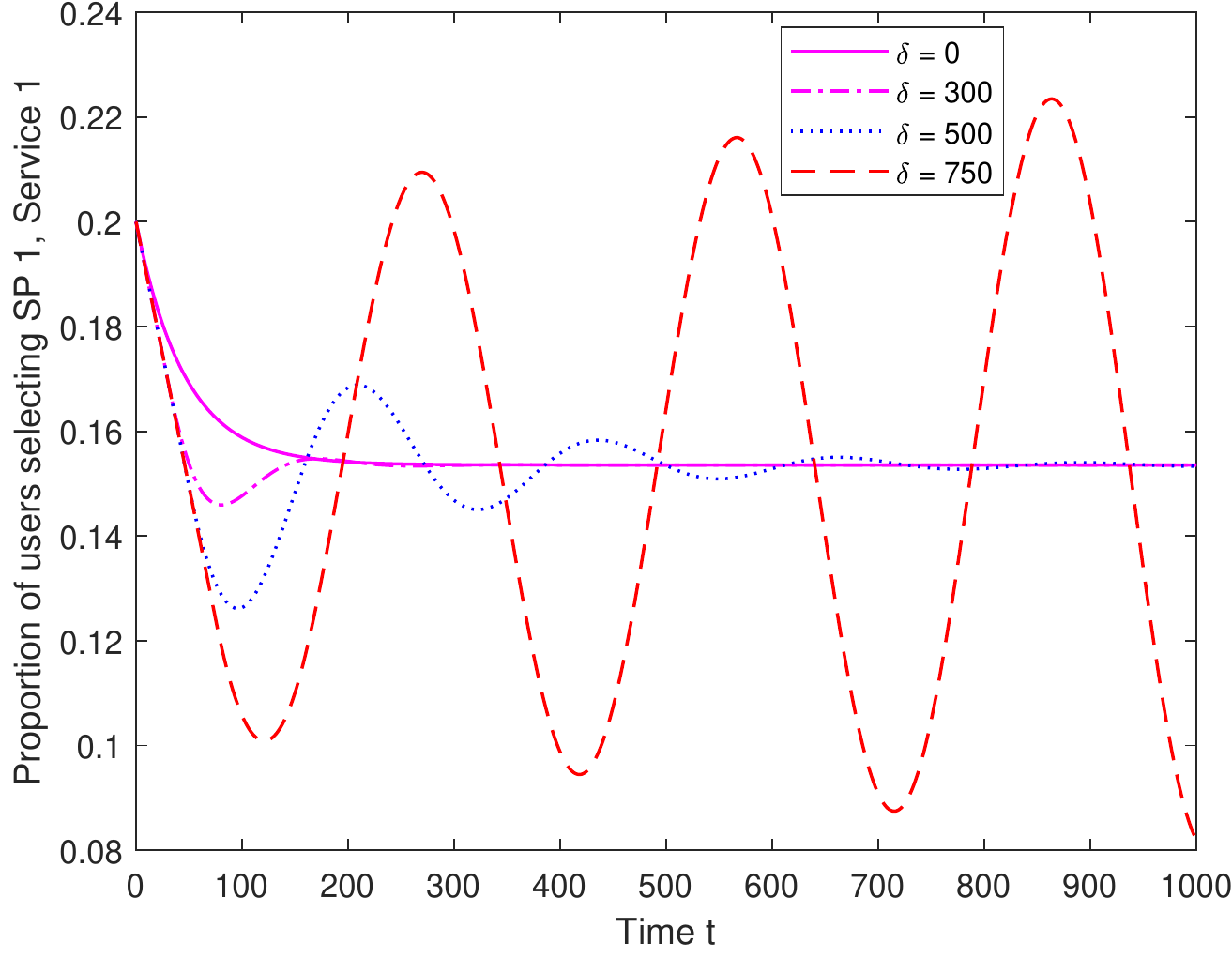}
		\subcaption{}
	\end{minipage}
	\begin{minipage}[t]{0.24\textwidth}
		\includegraphics[width=1\textwidth]{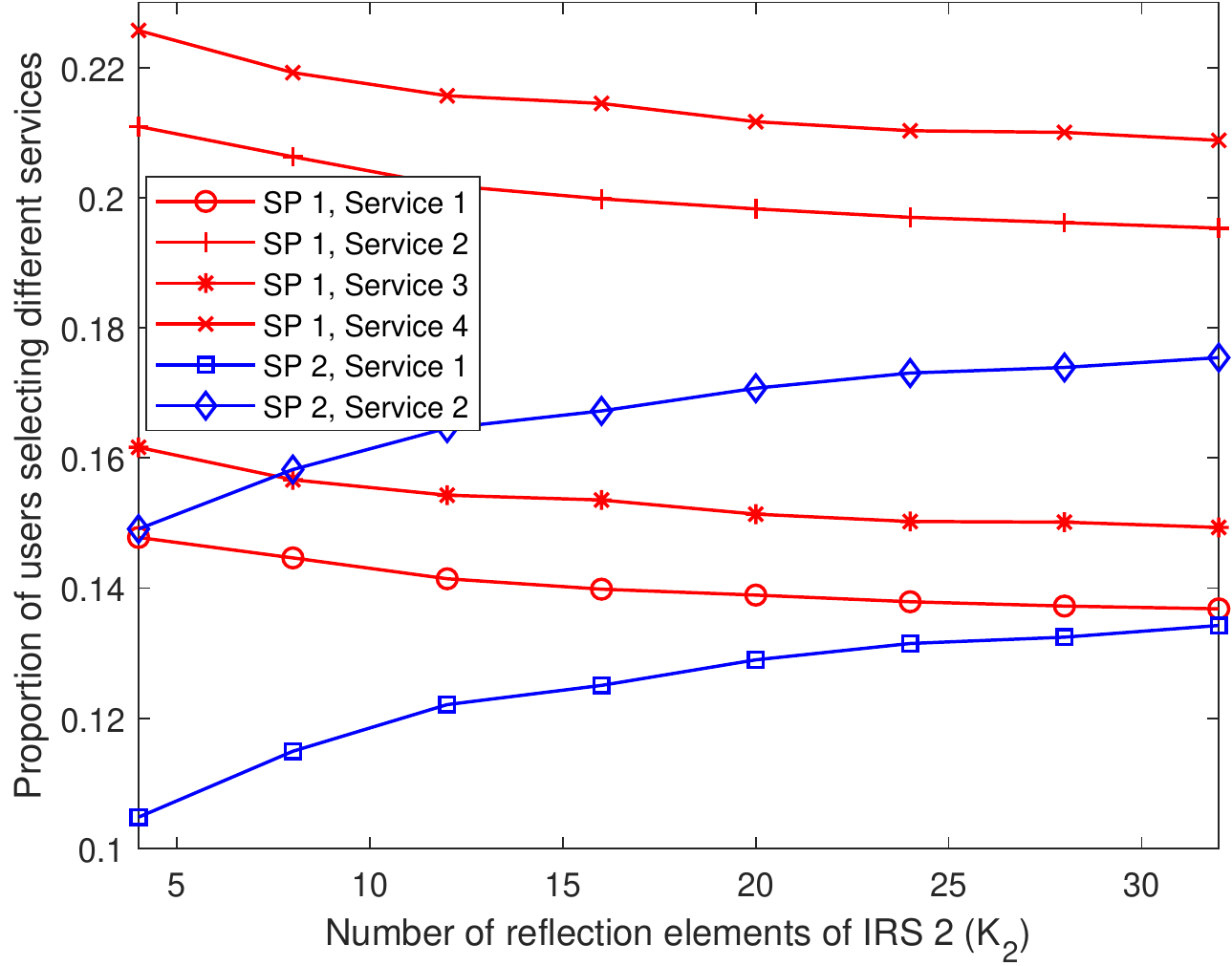}
		\subcaption{}
	\end{minipage}
	\caption{(a) Proportion of the users selecting SP1 and Service 1 with different delays and (b) proportions of the users selecting different SPs and services versus the size of IRS 2.}
	\label{proportion}
\end{figure}

Note that the time to reach the evolutionary equilibrium can be different depending on the learning rate $\mu$ and the number of users $N$ as shown in Fig.~\ref{equilibrium}(b). As seen, the evolutionary equilibrium is reached faster as the value of $\mu$ is higher. The reason is that the frequency of the strategy adaptation of the users is higher with the high value of $\mu$. Moreover, for a given value of $\mu$, more time is required to reach the evolutionary equilibrium as the number of users is higher. 

Next, we discuss the impact of information delay $\delta$ on the proportions of users selecting different SPs and services. For the evaluation purpose, we consider the proportion of the users selecting Service 1 provided by SP 1 as shown in Fig.~\ref{proportion}(a). As seen, when $\delta>0$, there is a fluctuating dynamics of strategy adaptation. The fluctuation becomes larger as $\delta$ increases, and the service selection cannot reach the evolutionary equilibrium as $\delta \geq 750$. This is due to the fact that the outdated information makes the users' decisions accurate. 

Now, we investigate the impact of sizes of IRSs on the proportions of users selecting different SPs and services. In particular, we vary the size $K_2$ of IRS 2 provided by SP 2. Note that $K_2$ is also the size of one module for trading since SP 2 does not divide IRS 2 into different modules. As shown in Fig.~\ref{proportion}(b), as the size of IRS 2 increases, the proportions of users selecting services provided by SP 2 increase. The reason is that as $K_2$ increases, the throughput and utility obtained by the users selecting services provided by SP 2 increase. However, as the size of IRS 2 is large, the increasing rate tends to be slower. This is because of that the users pay a very high resource cost as they select the services provided by SP 2. As a result, their utilities decrease, and they tend to select the services provided by SP 1. 

\begin{figure}[]
\centerline{\includegraphics[width=0.25\textwidth]{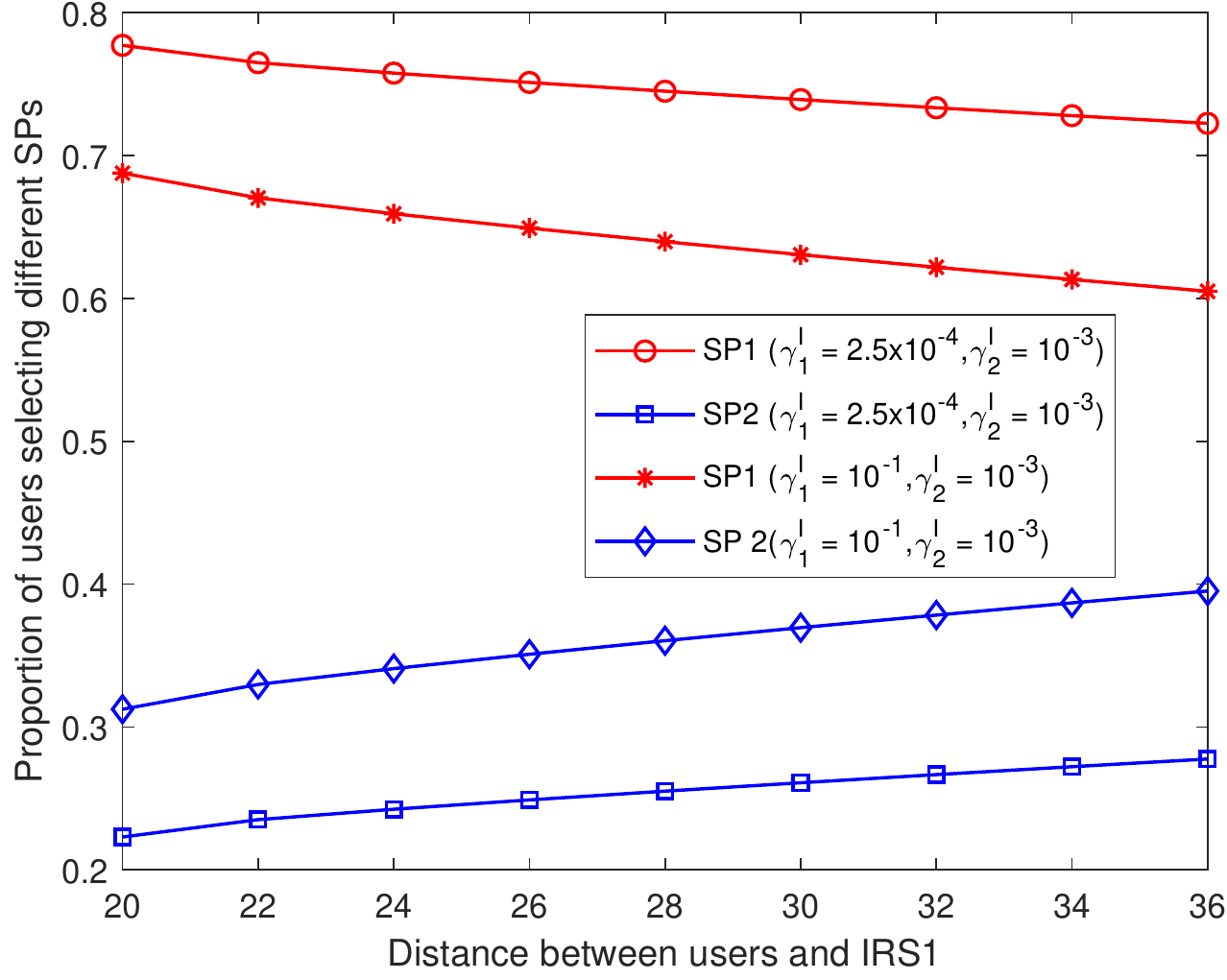}}
	\caption{Proportions of users selecting different SPs vs. distance and price.}
	\label{distance_price}
\end{figure}

We finally discuss how the mobility of the users impacts on the SP and service selection of the users. Figure~\ref{distance_price} shows the proportions of users selecting different SPs as the distance between the users and IRS 1 varies for different IRS prices set by SP 1. As observed, for a given price, the proportion of users selecting SP 2 increases as the distance between the users and IRS 1 increases. The reason is that as the distance between the users and IRS 1 increases, the throughput obtained by the users if selecting services of SP 1 decreases. Thus, the users are willing to select services of SP 2. Note that as SP 1 increases its service price, the utilities of the users currently selecting services of SP 1 decrease, and thus the proportion of users selecting SP 1 decreases as shown in the figure.

\section{Conclusions}
In this letter, we have investigated the dynamic SP and service
selection in an IRS-assisted wireless network. Specifically, we have formulated a joint SP and
service selection problem as an evolutionary game. We have modeled the SP and service adaptation of the users as replicator dynamics and analyzed the equilibrium of the evolutionary game. We have provided performance evaluation to demonstrate the consistency with the analytical results and to validate the proposed game model. 

\bibliographystyle{IEEEtran}
\bibliography{IRS_selection}{}



\end{document}